\numberwithin{equation}{section}
\newtheorem{theorem}{Theorem}[section]
\newtheorem{lemma}[theorem]{Lemma}
\newtheorem{proposition}[theorem]{Proposition}
\theoremstyle{definition}
\newtheorem{definition}[theorem]{Definition}
\newtheorem{remark}[theorem]{Remark}
\newcommand{\diff}[1]{\hspace{0.2em} \D #1}
\newcommand{\ind}{1\hspace{-2.1mm}{1}} 
\newcommand{\Cc}{\mathcal{C}}
\newcommand{\Kk}{\mathcal{K}}
\newcommand{\Ll}{\mathcal{L}}
\newcommand{\Mm}{\mathcal{M}}
\newcommand{\Pp}{\mathcal{P}}
\newcommand{\Tt}{\mathcal{T}}
\newcommand{\Vv}{\mathcal{V}}
\newcommand{\Xx}{\mathcal{X}}
\newcommand{\mm}{\mathrm{m}}
\newcommand{\MMD}{\mathfrak{D}}
\newcommand{\ii}{\boldsymbol{\mathrm{i}}}
\newcommand{\vv}{\mathrm{v}}
\newcommand{\ww}{\mathrm{w}}
\newcommand{\cc}{\mathrm{c}}
\newcommand{\xx}{\mathrm{x}}
\newcommand{\yy}{\mathrm{y}}
\newcommand{\Xf}{\mathfrak{X}}
\newcommand{\Ww}{\mathcal{W}}
\newcommand{\Ss}{\mathcal{S}}
\newcommand{\NN}{\mathbb{N}}
\newcommand{\RR}{\mathbb{R}}
\newcommand{\EE}{\mathbb{E}}
\newcommand{\PP}{\mathbb{P}}
\newcommand{\QQ}{\mathbb{Q}}
\newcommand{\D}{\mathrm{d}}
\newcommand{\E}{\mathrm{e}}
\newcommand{\eps}{\varepsilon}
\newcommand{\Hh}{\mathcal{H}}
\newcommand{\Ff}{\mathcal{F}}
\newcommand{\df}{\mathfrak{d}}
\newcommand{\wf}{\mathfrak{w}}
\newcommand{\Df}{\mathrm{D}}
\newcommand{\Af}{\mathrm{A}}
\newcommand{\Wf}{\mathrm{W}}
\newcommand{\Pf}{\mathrm{P}}
\newcommand{\Mf}{\mathfrak{M}}
\DeclareMathOperator*{\argmax}{\arg\!\max}
\DeclareMathOperator*{\argmin}{\arg\!\min}
\newcommand{\Pab}{\Pp_{a,b}^{d}}
\newcommand{\tens} {\mathbin{\mathop{\otimes}}}
\DeclarePairedDelimiter{\norm}{\lVert}{\rVert}
\newcommand{\kldiv}{\mathrm{KL}\kldivx}
\DeclarePairedDelimiterX{\kldivx}[2]{(}{)}{#1\delimsize\|#2}
\begin{document}

\title{Market regime classification with signatures}
\author{Paul Bilokon}
\address{The Thalesians and Department of Mathematics, Imperial College London}
\email{paul@thalesians.com}
\author{Antoine Jacquier}
\address{Department of Mathematics, Imperial College London and the Alan Turing Institute}
\email{a.jacquier@imperial.ac.uk}
\author{Conor McIndoe}
\address{Department of Mathematics, Imperial College London}
\email{conormcindoe1@gmail.com}
\thanks{
The \texttt{Python} code related to the results in this paper is available at
\href{https://github.com/mcindoe/regimedetection}{https://github.com/mcindoe/regimedetection}.}
\date{\today}
\maketitle


\begin{abstract}
We provide a data-driven algorithm to classify market regimes for time series. 
We utilise the path signature, encoding time series into easy-to-describe objects, and provide a metric structure which establishes a connection between separation of regimes and clustering of points.
\end{abstract}


\section{Introduction}

Market regimes are a clear feature of market data time series, with notions such as bull markets, bear markets, periods of calm and those of turmoil being commonplace in discussions between practitioners. As the saying goes, liquidity begets liquidity; regimes may be self-reinforcing for some time before a market shift is observed.
The dramatic market event of early 2020, brought about by the COVID-19 pandemic, brought with it high volatility and low liquidity. History does indeed seem to repeat itself: similar conditions were observed in many crashes over the last century, such as the 1929 Wall Street crash, the Black Monday event of 1987, the 2008 global crisis, and the 2015-2016 Chinese stock market crash.

The ability of an investor to recognise the underlying economic and market conditions and, ideally, to estimate the transition probabilities between market regimes, has long sought attention.
Kritzman, Page and Turkington~\cite{Kritzman12} used Markov-switching models to characterise regimes for portfolio allocation.
Jiltsov~\cite{Jiltsov20} has used Hidden Markov models on equity data to identify market clusters to capture opinions on credit risk of large banks.
This approach is partially data-driven, in that the number of regimes is initially asserted, and the resulting classifications are analysed once fitted.
Another approach, as as in~\cite{Nystrup20}, is to decide on some set of regimes and then segregate market data into the predetermined categories.

In this paper, we seek a framework which is a data-driven as possible, without specifying either the number or characteristics of the final regimes.
To do so, we make intensive use of the signature of a path, which originated in~\cite{Chen58} and was developed in the context of rough paths by Lyons and coauthors~\cite{Boedihardjo, Lyons10, Lyons07, LyonsBook, LyonsInverting}.
The path signature has recently received much attention, proving itself to be a natural language to encode time series data in a form suitable for machine learning tasks. 
The key idea of the present contribution is to use these path signatures as
points in some suitable metric space that can then be classified using a clustering algorithm.

In Section~\ref{cha:clustering}, we present the Azran and Ghahramani clustering algorithm~\cite{Azran06} and show how to apply it to finite-dimensional data.
We recall in Section~\ref{sec:ClusteringSign} the basic definitions and properties of path signatures.
Finally, in Section~\ref{sec:Application}, we show how to incorporate signatures 
as points in the clustering algorithm and how to define a suitable notion of distance between these points.
We provide a numerical example on synthetic data as a demonstration of the algorithm.

\section{Data-driven clustering}\label{cha:clustering}

We recall here the data-driven clustering algorithm by Azran and Ghahramani~\cite{Azran06} (AG-algorithm) over arbitrary metric spaces.
The algorithm is purely data-driven, in that the number and shape of clusters is left unspecified and both are suggested by the algorithm.

\subsection{The Azran-Ghahramani clustering}

We consider a given metric space with a distance function~$\df$
and a collection of points $\Xx = \{x_1, \ldots, x_n\}$.

\begin{definition}
A similarity function~$\wf:\RR_+\to\RR_+$ is a monotonically decreasing function.
Given $(\Xx, \df)$ as above, the similarity matrix is the matrix $\Wf = (\ww_{i,j})_{1\leq i,j\leq n}$
defined as $\ww_{i,j}:=\wf(\df(x_i,x_j))$.
\end{definition}

The $(i,j)$-entry of $\Wf$ is the similarity between points $x_i$ and $x_j$. Natural candidates for the similarity function include the inverse function $\wf(x) = x^{-1}$,
the squared inverse function $\wf(x) = x^{-2}$ and the Gaussian $\wf(x) = \exp(-x^2)$.
We further define the matrix~$\Pf:=\Df^{-1} \Wf$,
where~$\Df$ is the diagonal matrix in~$\Mm_{n}(\RR)$ with $\Df_{ii}: = \sum_{j = 1}^{n} \ww_{ij}$,
so that~$\Pf$ corresponds to a transition matrix.
We introduce the natural notion of cluster as a set of points that are close to each other:
\begin{definition}
\label{def:clusters}
A cluster of size $k\leq n$ is a subset $\Cc\subset \Xx$
such that $\min\limits_{x,y \in\Cc}\wf(\df(x,y)) > \max\limits_{x\in\Cc, y \in \Xx\setminus\Cc}\wf(\df(x,y))$.
\end{definition}

We consider the random walk of~$n$ particles $X_1,\ldots, X_n$, starting from $x_1,\ldots, x_n$ respectively, whose location evolves according to the homogeneous Markov chain with transition matrix~$\Pf$, so that $\Pf_{i,j}$ is the probability that the particle moves from~$x_i$ to~$x_j$ between two time steps.
Let $X_i(t)$ denote the (row vector) discrete distribution of the $i$\textsuperscript{th} particle after~$t$ steps.
Clearly $X_i(0)$ is a Dirac mass centered at~$x_i$; 
at any (discrete) time $t \geq 1$, the discrete distribution of the $i$\textsuperscript{th} particle is given by
\begin{equation}\label{eq:distribution-of-x-i}
    X_i(t) = X_i(t-1)\Pf = \cdots = X_i(0)\Pf^t = \begin{pmatrix} 0 & \cdots & 0 & 1 & 0 & \cdots & 0\end{pmatrix}\Pf^t,
\end{equation}
where the~$1$ is in the $i$\textsuperscript{th} position, 
as a placeholder for the position~$x_i$.

In a well-clustered space, where similarities between points of the same cluster are high and between points of distinct clusters are low, we expect particles to mostly remain within their clusters throughout the random walk. So, after a sufficient number of steps, the distributions of particles beginning in these well-separated clusters should be similar. It follows then by~\eqref{eq:distribution-of-x-i} that the corresponding rows of~$\Pf^t$ will be similar. Note that this establishes a correspondence between similar points in the metric space and similar rows in the matrix~$\Pf^t$.
We summarise the following important properties of the matrix~$\Pf$~\cite[Lemma 1, Lemma 2]{Azran06}:

\begin{lemma}\label{lem:properties-of-P}
If $\Wf$ is full rank, then so is $\Pf$.
The spectrum of~$\Pf$ is of the form $1=\lambda_1\geq \cdots \geq \lambda_n\geq -1$.
Let~$v_k$ be the eigenvector corresponding to~$\lambda_k$, chosen with unit norm.
Then, for any $t \geq 1$,
\begin{equation}\label{eq:computation-of-P^t}
        \Pf^{t} = \sum_{k = 1}^{n} \lambda_{k}^{t} \Af_k,
    \end{equation}
with $\Af_k := \frac{v_k v_k^{T}}{v_k^{T} \Df v_k}\Df$ idempotent and orthogonal,
$\{\Af_k\}_{k = 1}^{n}$ forming a basis of the space generated by
$\{\Pf^k\}_{k\geq 1}$.
\end{lemma}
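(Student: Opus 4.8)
The plan is to reduce everything to the spectral theorem by \emph{symmetrising} $\Pf$. First, since $\wf$ takes values in $\RR_+$ and $\df(x_i,x_j)<\infty$ for all $i,j$, every entry $\ww_{ij}$ is positive, so each diagonal entry $\Df_{ii}=\sum_j\ww_{ij}$ is strictly positive; hence $\Df$ is an invertible positive‑definite diagonal matrix and its powers $\Df^{\pm 1/2}$ are well defined. If $\Wf$ is full rank, then $\Pf=\Df^{-1}\Wf$ is a product of invertible matrices, hence full rank, which settles the first assertion.

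Next I would introduce the matrix $\Sf:=\Df^{-1/2}\Wf\Df^{-1/2}$. Since $\df$ is symmetric, $\Wf$ is symmetric, and therefore so is $\Sf$; moreover $\Sf=\Df^{1/2}\Pf\Df^{-1/2}$, so $\Pf$ and $\Sf$ are similar and share the same, necessarily real, spectrum. To pin down the interval $[-1,1]$ I would use that $\Pf$ is row‑stochastic, $\sum_j\Pf_{ij}=\Df_{ii}^{-1}\sum_j\ww_{ij}=1$, so the all‑ones vector is a right eigenvector for the eigenvalue $1$, while $\|\Pf\|_\infty=1$ (equivalently Gershgorin's theorem applied to the nonnegative, row‑sum‑one matrix $\Pf$) bounds the spectral radius by $1$. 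Combined with reality of the spectrum, this gives $1=\lambda_1\geq\cdots\geq\lambda_n\geq -1$.

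For \eqref{eq:computation-of-P^t}, I would apply the spectral theorem to $\Sf$: choose an orthonormal eigenbasis $u_1,\dots,u_n$ with $\Sf u_k=\lambda_k u_k$, so that $\Sf^t=\sum_{k=1}^n\lambda_k^t\,u_ku_k^T$ for every $t\geq 1$. Conjugating back, $\Pf^t=\Df^{-1/2}\Sf^t\Df^{1/2}=\sum_{k=1}^n\lambda_k^t\,\Df^{-1/2}u_ku_k^T\Df^{1/2}$. The vectors $v_k:=\Df^{-1/2}u_k$ are eigenvectors of $\Pf$ for $\lambda_k$, and a direct computation (noting that $\tfrac{v_kv_k^T}{v_k^T\Df v_k}\Df$ is invariant under rescaling $v_k$, so the unit‑norm normalisation in the statement is immaterial) shows $\Df^{-1/2}u_ku_k^T\Df^{1/2}=\tfrac{v_kv_k^T}{v_k^T\Df v_k}\Df=\Af_k$, which is the claimed identity. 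Idempotency $\Af_k^2=\Af_k$ and orthogonality $\Af_j\Af_k=0$ for $j\neq k$ then follow immediately from $u_j^Tu_k=\delta_{jk}$.

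It remains to address the basis claim, which I expect to be the one genuinely delicate point. Linear independence of $\{\Af_k\}$ is easy: multiplying a relation $\sum_kc_k\Af_k=0$ on the left by $\Af_j$ and using orthogonality and idempotency forces $c_j\Af_j=0$, hence $c_j=0$ since $\Af_j\neq 0$. Each power $\Pf^m$ lies in $\mathrm{span}\{\Af_k\}$ by \eqref{eq:computation-of-P^t}; conversely, one recovers each $\Af_k$ from $\Pf^1,\dots,\Pf^n$ by inverting the linear system $\Pf^m=\sum_k\lambda_k^m\Af_k$, $m=1,\dots,n$, whose coefficient matrix factorises as $\mathrm{diag}(\lambda_1,\dots,\lambda_n)$ times a Vandermonde matrix in the $\lambda_k$ — invertible because $\Pf$ is full rank (no zero eigenvalue) and, in the generic situation covered by the hypotheses, the $\lambda_k$ are pairwise distinct. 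This is where I would be most careful, since if some eigenvalues coincide the statement must instead be read with the $\Af_k$ grouped into the corresponding spectral projections; all the other verifications are routine linear algebra.
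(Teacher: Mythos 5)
Your proof is correct. The paper does not prove this lemma itself --- it imports it from \cite{Azran06} --- and your symmetrisation argument via $\Df^{-1/2}\Wf\Df^{-1/2}$, combined with row-stochasticity and Gershgorin for the spectral bound, is exactly the standard route taken in that reference, including the observation that $\Af_k$ is invariant under rescaling of $v_k$. Your caveat at the end is also well placed: when eigenvalues coincide (as in the idealised case $\lambda_1=\cdots=\lambda_K=1$ the paper itself discusses), the individual $\Af_k$ are neither unique nor recoverable from $\{\Pf^m\}_{m\ge 1}$, and the ``basis'' claim must be read in terms of the spectral projections; this is a genuine imprecision in the statement, not in your argument.
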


From~\eqref{eq:computation-of-P^t}, we see that eigenvalues close to~$1$ correspond
to more stable basis elements, whereas small eigenvalues quickly shrink to zero as time evolves.
As discussed in~\cite[Assumption A1]{Ng01}, 
the special case of~$K$ separated clusters with no connections between points in different clusters 
(i.e. zero similarity between such points) corresponds to the eigenvalues satisfying 
$1=\lambda_1 = \cdots = \lambda_K>\lambda_{K+1}$.
In this case~\eqref{eq:computation-of-P^t} implies that~$\Pf^t$ converges to $\sum_{k = 1}^{K} \Af_k$
as~$t$ tends to infinity.
This motivates the following definition, which gathers the essential tools required to build the clustering algorithm:

\begin{definition}
\label{def:previous-steps-to-best-reveal}
For $t\geq 1$ and $k=1, \ldots, K$, define the~$k$\textsuperscript{th} eigengap after~$t$ steps by $\Delta_k(t) := \lambda_k^t - \lambda_{k+1}^t$ and
let $\Kk_t := \argmax_k \hspace{0.1em} \Delta_k(t)$ be the $k$-eigengap which is largest after $t$ time steps, at which point we say $k$ clusters are revealed.
For a given number of clusters, $k$, $T_k:=\{t:\Kk_t = k\}$ represents the set of all time steps at which~$k$ clusters are revealed. We call 
$t_k := \argmax_{t\in T_k}\Delta_k(t)$ the $k$-cluster revealer, at which point the~$k$ clusters are best segregated.
Finally, we call
$\Delta(t) := \max\limits_{k \in \{1, \ldots, n\}} \Delta_k(t)$ the maximal eigengap separation after~$t$ steps.
\end{definition}

We use the term $k$-clustering for a partition of~$\Xx$ into~$k$ subsets,
and say that a $k'$-clustering is better revealed than a $k$-clustering after~$t$ steps if $\Delta_{k'}(t) > \Delta_k(t)$.

The AG-algorithm suggests $k$-clusterings for values of~$k$ for which there exists some number of steps~$t$, 
at which~$k$ clusters is better revealed than any other number of clusters~$k'$. 
If, however, there exists some $k' \neq k$ for which $\Delta_k(t_k) < \Delta_{k'}(t_k)$, then $k$ is not considered a suitable number of clusters for the data. 
We are therefore interested in computing the set of time steps $\Tt := \{t_1, \ldots, t_m\}$ where $\Delta(\cdot)$ attains a local maxima, and then for each $t_i \in \Tt$ computing the number of clusters, $k_i := \Kk_{t_i}$, where $\Delta_{k_i}(t_i) = \Delta(t_i)$. Finally, the suggested $k_i$-clustering of the points is inferred by finding a $k_i$-clustering of the rows of the matrix $\Pf^{t_i}$. The value $\Delta_{k_i}(t_i)$, bounded above by 1, provides a measure of the separation of clusters in the returned partition, as motivated by the discussion before Definition~\ref{def:previous-steps-to-best-reveal}.

We summarise the process in Algorithm~\ref{alg:multiscale-k-prototypes}, deferring for now the discussion of how to compute the $k$-clustering for given $k$.

\begin{algorithm}[ht]\label{algo:MultiScale}
    \label{alg:multiscale-k-prototypes}
    \textbf{Input}: Metric space $(\Xx, \df)$, maximum number of steps $T$\\
    \textbf{Output}: Collection of suggested partitions with the corresponding eigengap separations
    \caption{Multiscale $k$-Prototypes Algorithm}
    \begin{enumerate}[(i)]
        \item Compute $\Pf$ and its spectrum $\lambda_1 \geq \ldots \geq \lambda_n$;
        \item Compute $\Delta(t)$ for $t \in \{1, \ldots, T\}$. Find the set of local maxima $\Tt := \{t_1, \ldots, t_m\}$;
        \item For each $t_i \in \Tt$, find the number of clusters $k_i$ best revealed by $t_i$ steps;
        \item For each $k_i$, compute the corresponding $k_i$-partitioning $\mathcal{I}_{k_i}$;
        \item Return the final collection $\{(\mathcal{I}_{k_1}, \Delta(t_{k_1})), \ldots, (\mathcal{I}_{k_m}, \Delta(t_{k_m}))\}$.
    \end{enumerate}
\end{algorithm}

In order to determine the $k$-clusterings~$\mathcal{I}_k$ in Algorithm~\ref{algo:MultiScale}, we make use of an algorithm which is similar to $k$-means clustering, called the $k$-prototypes algorithm (the word `prototype', borrowed from~\cite{Azran06}, refers to the distribution vectors of the particles following the random walk).
In $k$-means clustering, a distance between vectors is used to separate points into $k$ clusters. Here we have distributions, and hence a slightly different approach is suggested in~\cite{Azran06}, making use of the Kullback-Leibler divergence which we now recall.

\begin{definition}\label{def:kl-divergence}
Given two probability distributions~$\PP$ and~$\QQ$ on~$\Xx$,
the Kullback-Leibler (KL) divergence from~$\QQ$ to~$\PP$ is defined as
$$\kldiv{\PP}{\QQ} := \sum_{x \in \Xx} \PP(x) \log \left( \frac{\PP(x)}{\QQ(x)} \right )$$
\end{definition}

The Kullback-Leibler divergence is not a proper distance function since it is not symmetric,
but nevertheless gives a notion of disparity between two probability distributions,
and is hence used in the $k$-prototype algorithm~\cite{Azran06} to compare the distributions
of the particles.
As argued in~\cite{Azran06}, the usual Euclidean distance is not adapted here as it gives all elements the same weights.
For a given number of clusters~$k$ and number of steps~$t$, a suitable $k$-clustering is identified by minimising the function
$$
\sum_{j=1}^{k}\sum_{m\in\mathcal{I}_j}\kldiv{\Pf^t_m}{Q_j},
$$
over all partitioning $\mathcal{I} = (\mathcal{I}_1,\ldots, \mathcal{I}_k)$ and distributions
$(Q_1,\ldots, Q_k)$.
We refer to~\cite{Azran06} for a precise detail of the recursive algorithm to solve this non-convex optimisation problem, together with some specification about the initial starting point of the algorithm.
Their approach is summarised in Algorithm~\ref{alg:k-prototypes}.

\begin{algorithm}[ht]
    \label{alg:k-prototypes}
    \caption{$k$-Prototypes Algorithm}
    \textbf{Input}: Transition matrix $\Pf^t\in \RR^{n \times n}$, number of clusters $k$, initial matrix $Q \in \RR^{k \times n}$ of prototypes; \\
    \textbf{Initialisation}: $Q^{\textrm(old)} := Q$; \\
    \textbf{Output}: Partition $\mathcal{I}$, of size $k$, of the indices $\{1, \ldots, n\}$;\\
    \begin{enumerate}[(i)]
    \item For $j \in \{1, \ldots, k\}$, $\mathcal{I}_j^{\rm{(new)}} := \left\{ m: j = \argmin_{j \in \{1, \ldots, k\}} \kldiv[\big]{\Pf^t_m}{Q_j^{\rm{(old)}}} \right\}$;
    \item For $j \in \{1, \ldots, k\}$, define
        $Q_j^{\textrm{(new)}} := | \mathcal{I}_j^{\textrm{(new)}}|^{-1}\sum_{m \in \mathcal{I}_{j}^{\textrm{(new)}}} \Pf^t_m$;
    \item Set $Q^{\textrm{(old)}} := Q^{\textrm{(new)}}$ and return to (i) until convergence or stop criterion.
    \end{enumerate}
\end{algorithm}

Note that step (ii) of Algorithm~\ref{alg:k-prototypes} is not well-defined if any of the partition elements $\mathcal{I}_j$ are empty. Our approach in this case is to first compute the prototypes $Q_j$ for all $j$ where $| \mathcal{I}_j | > 0$, and then iteratively define the remaining prototypes $Q_m$ to be the row of the transition matrix $\Pf^t$ for which the minimum KL-divergence to all currently-defined prototypes is maximised. This ensures the space remains well-covered by the cluster prototypes; a similar technique is discussed to initialise the matrix of prototypes in~\cite[Section 4.2]{Azran06}


\subsection{Gaussian Clouds}

We now turn to applications of the AG-algorithm,
and begin with an example in~$\RR^2$, equipped with the Euclidean distance.
For $\mm\in \RR^2$, $\sigma>0$, $n \in \NN$, we denote by~$\Xx_{\mm}^{\sigma}(n)$ a Gaussian Cloud of size~$n$,  centre~$\mm$ and variance~$\sigma^2$, 
that is a collection $\{a_1, \ldots, a_n\}$ where each $a_i$ is Gaussian with mean~$\mm$ and covariance matrix $\sigma^2 I_2$.
Figure~\ref{fig:gaussian-clouds-points} shows the generated points, with four cluster centres, 
each cluster consisting of~$100$ points.

\begin{figure}[ht]
    \centering
        \includegraphics[scale=0.5]{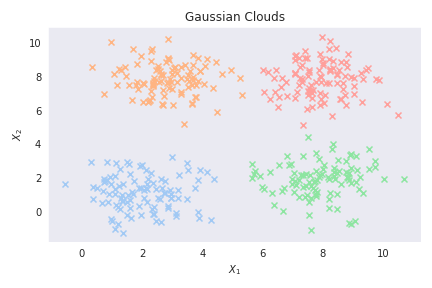}
    \caption{Gaussian Clouds where $\sigma$ is set equal to one for all four clusters and the centers~$\mm$ are 
$(2, 1)$, $(3, 8)$, $(8, 2)$, $(8, 8)$.}
    \label{fig:gaussian-clouds-points}
\end{figure}

We make use of the Gaussian similarity function:
\begin{equation}\label{eq:gaussian-similarity}
\wf^\xi(x) := \exp \left( - \frac{x}{\xi^2} \right).
\end{equation}
The optimal choice of~$\xi$ is not entirely obvious and, following~\cite{Azran06}, we choose it to be the $1\%$ low-value quantile of the non-zero distances in the space.
In Figure~\ref{fig:gaussian-clouds-eigengap-separations}, we demonstrate in the left-hand plot the curves $\Delta_k(\cdot)$ for some small values of~$k$. The right-hand plot is the curve $\Delta(\cdot)$, which we recall as the corresponding maximum over all~$k$. 
The local maxima of this latter curve represents points at which some clustering is best revealed. 
\begin{figure}[hb!]
    \centering
        \includegraphics[scale=0.5]{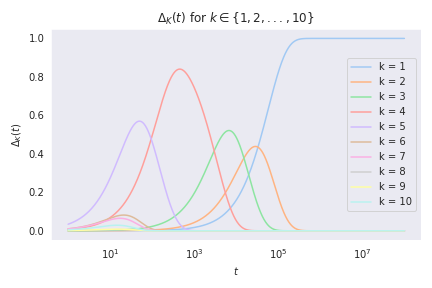}
        \includegraphics[scale=0.5]{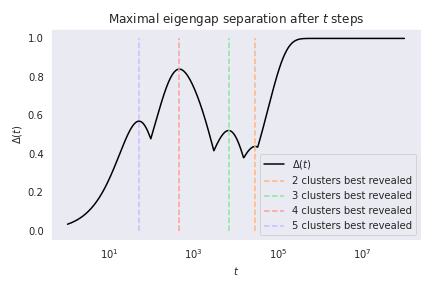}
        \caption{Gaussian Clouds - $\Delta_k(t)$ and $\Delta(t)$.}
        \label{fig:gaussian-clouds-eigengap-separations}
\end{figure}

The suggested 3-, 4- and 5-clusterings are displayed in Figure~\ref{fig:gaussian-clouds-suggested-clusterings}. 
The suitability of the resulting partitions, as indicated by the eigengap separation, suggests that the preferred partition of the points is the 4-clustering, followed by the 5-clustering and finally the 3-clustering. The recommended 4-clustering is almost the same but not identical to the original problem specification. 

\begin{figure}[ht]
    \centering
        \includegraphics[scale=0.5]{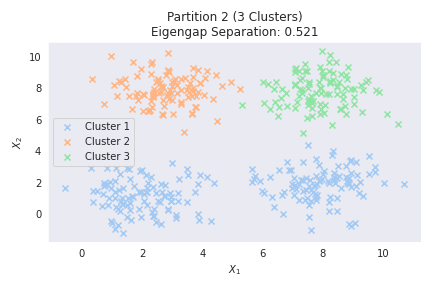}
        \includegraphics[scale=0.5]{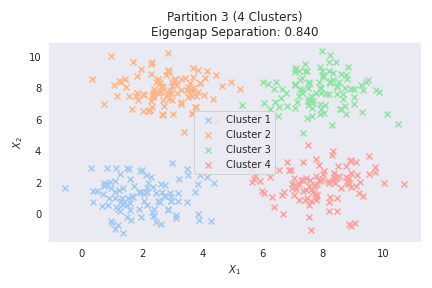}
        \includegraphics[scale=0.5]{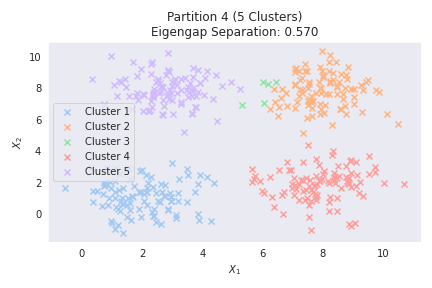}
        \caption{Gaussian Clouds - 3 suggested clusterings}
        \label{fig:gaussian-clouds-suggested-clusterings}
\end{figure}

\section{Clustering paths via signatures}\label{sec:ClusteringSign}
We now move on to the key part of the paper, demonstrating how the AG-algorithm above can be applied to time series.
The obvious two hurdles to overcome are the infinite-dimension feature of time series
and a suitable notion of distance.
To do so, we summarise in Section~\ref{sec:Signature} the information contained in a series in its so-called signature,
and show then how this helps us extend the AG-algorithm.

\subsection{An overview of signatures}\label{sec:Signature}
We provide an overview of signatures of paths for completeness.
They date back to Chen~\cite{Chen58}, but have received widespread investigation in the context of rough paths~\cite{Boedihardjo, Lyons10}.
Signatures can be defined for large classes of functions of bounded variation~\cite{FrizVictoir},
but we restrict our analysis for simplicity to the case of piecewise smooth functions.
A path~$\gamma$ is a continuous map from $[a,b]$ to~$\RR^d$ ($d\in\mathbb{N}$).
If the map $t \mapsto \gamma_t$ is differentiable,
the integral of a function $f:\RR^d \to \RR^p$
along~$\gamma$ is classically defined by
$\int_a^b f(\gamma_t) \D{\gamma_t} := \int_a^b f(\gamma_t) \dot{\gamma}_t \D{t}$.
In the case of piecewise differentiable curves~$\Pab$ from $[a,b]$ to~$\RR^d$, we may extend this to
$$
\int_{a}^{b} f(\gamma_t) \D{\gamma_t} := \int_{x_1}^{x_2} f(\gamma_t) \D{\gamma_t}  + \int_{x_2}^{x_3} f(\gamma_t) \D{\gamma_t} + \ldots + \int_{x_{n-1}}^{x_n} f(\gamma_t) \D{\gamma_t}
\in\RR^p,
$$
where $a = x_1 < x_2 < \ldots < x_n = b$ is a partition of $[a,b]$ with~$\gamma$ differentiable over each interval $(x_i, x_{i+1})$.

\begin{definition}\label{def:k-fold-iterated-integral}
Let $\gamma\in\Pab$.
For $k \in \NN$ and $\ii = (i_1, \ldots, i_k) \in \{1, \ldots, d\}^k$, we define recursively, for $t\in (a,b]$,
\begin{align*}
\Ss(\gamma)_{a,t}^{i} & := \int_{(a,t]}\D\gamma_s^i, \quad \text{for } i=1,\ldots, d,\\
\Ss(\gamma)_{a,t}^{\ii} & :=  \int_{a < t_1 < \ldots < t_k < t} \D{\gamma_{t_1}^{i_1}} \cdots \D{\gamma_{t_k}^{i_k}} = \int_{a<s<t} \Ss(\gamma)_{a,s}^{i_1, \ldots, i_{k-1}} \D{\gamma_s^{i_k}}.
\end{align*}
\end{definition}

The path signature is then defined as the collection of all such iterated integrals.
This can be stated in a more elegant way though, via the notion of alphabets.
For a given alphabet~$A$, i.e. a (finite or not) sequence of elements,
a word~$\xx$ of length $|\xx|\in\NN$ is an ordered sequence $x_1 x_2 \cdots x_{|x|}$
with $x_i \in A$ for $i \in \{1, \cdots, |x|\}$.
We denote by~$\Ww(A)$ the set of all possible words, including the empty word~$\eps$ (of length zero).
For example, the set of words of length 3 on~$A := \{a, b\}$ is
$ \{aaa, aab, aba, abb, baa, bab, bba, bbb\} $, while
the set of all words on~$A$ is
$\Ww(A) = \{\eps, a, b, aa, ab, ba, bb, aaa, aab, \ldots\}$.
If an ordering~$<$ exists on the alphabet, we can extend it recusively to~$\Ww(A)$:
the concatenation of $\xx, \yy\in\Ww(A)$ is defined to be the word $\xx\yy := x_1,\cdots,x_{|\xx|}, y_1,\cdots,y_{|\yy|}$;
then by setting $\eps < \xx$ whenever $|\xx|>0$,
we say, for any $a, b \in A$, that $a\xx < b\yy$ if either $a < b$ (in~$A$), or $a = b$ and $\xx < \yy$.

\begin{definition}\label{def:path-signature}
For $\gamma\in\Pab$, the path signature $\Ss(\gamma)_{a,b}$ is the collection of iterated integrals
$\Ss(\gamma)_{a,b}^{\ii}$ with $\ii\in \Ww(\{1, \ldots, d\})$ and $\Ss(\gamma)_{a,b}^{\eps}:=1$.
The truncated signature $\Ss(\gamma)_{a,b}^{\leq n}$ up to level~$n\in\NN$
is the restriction of $\Ss(\gamma)_{a,b}^{\ii}$ to $|\ii|\leq n$. The terms in the signature appear in the same order as the corresponding words in the natural ordered alphabet:
$$
\Ss(\gamma)_{a,b} := \left(  \Ss(\gamma)_{a,b}^{\eps}, \Ss(\gamma)_{a,b}^{1}, \ldots, \Ss(\gamma)_{a,b}^{d}, \Ss(\gamma)_{a,b}^{11}, \Ss(\gamma)_{a,b}^{12}, \ldots \right)
$$
\end{definition}

The following proposition demonstrates that the starting point of the path is lost when projecting to the signature: indeed, the signature is invariant to translations of the original path.

\begin{proposition}[Translation invariance]
\label{prop:translation-invariance-of-path-signature}
For any $\gamma\in\Pab$, $\cc\in \RR^d$ and $\ii \in \Ww(\{1, \ldots, d\})$, we have
$$\Ss(\gamma + \cc)^{\ii}_{a, b} = \Ss(\gamma)^{\ii}_{a, b}$$
\end{proposition}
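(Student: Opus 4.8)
The plan is to induct on the word length $|\ii|$, resting on the single elementary fact that adding a constant vector leaves the differential of the path unchanged. Write $\tilde\gamma := \gamma + \cc$. On each subinterval $(x_i, x_{i+1})$ of a partition on which $\gamma$ is differentiable, $\dot{\tilde\gamma}_t = \dot\gamma_t$; hence $\tilde\gamma$ is again piecewise differentiable with the same partition, so $\tilde\gamma \in \Pab$, and the integral $\int f(\tilde\gamma_t)\,\D\tilde\gamma^{j}_t$ reduces piece-by-piece to the same expression with $\D\gamma^{j}_t$. In short, $\D\tilde\gamma^{j}_t = \D\gamma^{j}_t$ in the sense of Definition~\ref{def:k-fold-iterated-integral}. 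This observation is essentially the whole content of the proof; the induction merely propagates it through the iterated integrals.

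For the base case $|\ii| = 1$, write $\ii = i \in \{1,\ldots,d\}$; then for every $t \in (a,b]$, Definition~\ref{def:k-fold-iterated-integral} gives directly
$$\Ss(\tilde\gamma)^{i}_{a,t} = \int_{(a,t]} \D\tilde\gamma^{i}_s = \int_{(a,t]} \D\gamma^{i}_s = \Ss(\gamma)^{i}_{a,t}.$$
For the inductive step, assume $\Ss(\tilde\gamma)^{\ii}_{a,t} = \Ss(\gamma)^{\ii}_{a,t}$ for all $t \in (a,b]$ and every word $\ii$ of length $k-1$. Given $\ii = (i_1,\ldots,i_k)$, the recursive identity in Definition~\ref{def:k-fold-iterated-integral} together with the observation above yields
$$\Ss(\tilde\gamma)^{\ii}_{a,t} = \int_{a<s<t} \Ss(\tilde\gamma)^{i_1,\ldots,i_{k-1}}_{a,s}\,\D\tilde\gamma^{i_k}_s = \int_{a<s<t} \Ss(\gamma)^{i_1,\ldots,i_{k-1}}_{a,s}\,\D\gamma^{i_k}_s = \Ss(\gamma)^{\ii}_{a,t},$$
the middle equality using the induction hypothesis on the integrand (since $(i_1,\ldots,i_{k-1})$ is a word of length $k-1$) and $\D\tilde\gamma^{i_k}_s = \D\gamma^{i_k}_s$. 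Setting $t = b$ gives the claim for all nonempty $\ii$; the empty-word entries agree since both equal $1$ by the convention in Definition~\ref{def:path-signature}.

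There is essentially no obstacle. The only point deserving a line of care is the very first one: recording that $\gamma + \cc$ still lies in the admissible class $\Pab$ and that the differential symbol $\D(\gamma+\cc)$ coincides coordinatewise with $\D\gamma$ on each differentiable piece. Once that is in place the induction is automatic, and in fact the same argument shows the stronger statement $\Ss(\gamma + \cc)_{a,t} = \Ss(\gamma)_{a,t}$ for every $t \in (a,b]$, not merely for $t = b$.
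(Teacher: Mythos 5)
Your proof is correct and takes essentially the same route as the paper's: induction on the word length, driven by the single observation that $\D(\gamma+\cc)=\D\gamma$ on each differentiable piece. If anything, your version is slightly more careful, since you carry the induction hypothesis for all $t\in(a,b]$ (which the recursive definition of the iterated integrals genuinely requires), whereas the paper's displayed inductive step writes the integrand with subscript $a,b$ instead of $a,s$.
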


\begin{proof}
    Write $\cc = (c^1, \ldots, c^d)$, $\widetilde{\gamma} := \gamma + \cc$, and take $i \in \{1, \ldots d\}$. We have $\widetilde{\gamma}^i_t = \gamma^i_t + c^i$, 
hence 
    \begin{equation*}
        S(\widetilde{\gamma})^{i}_{a, b} = \int_a^b \diff{\widetilde{\gamma}_t^i} = \int_a^b \frac{\D \widetilde{\gamma}_t^i}{\D t} \diff{t}
 = \int_a^b \frac{\D \gamma_t^i}{\D t} \diff{t} = S(\gamma)^i_{a,b},
    \end{equation*}
    which shows the result for any word of length one. Now suppose the result holds for any $k$-length word. Let $i_1 \cdots i_k i_{k+1}$ be a $(k+1)$-length word; we have
    \begin{equation*}
        S(\widetilde{\gamma})^{i_1, \ldots, i_k, i_{k+1}}_{a,b} = \int_a^b S(\widetilde{\gamma})_{a, b}^{i_1, \ldots, i_k} \hspace{0.2em} \frac{\D \widetilde{\gamma}^{i_{k+1}}_t}{\D t} \diff{t} = \int_a^b S(\gamma)_{a, b}^{i_1, \ldots, i_k} \frac{\D \gamma_t^{i_{k+1}}}{\D t} \diff{t} = S(\gamma)_{a,b}^{i_1, \ldots, i_k, i_{k+1}},
    \end{equation*}
so the terms of the signatures of each path agree for any word in $\mathcal{W}(\{1, \ldots, d\})$, 
hence for the entire signature.
\end{proof}

\subsubsection{Logsignatures}
\label{sec:log-signature}
The logsignature is a more concise representation of the information present in a signature.
We introduce the vector space~$\Vv$ of non-commutative formal power series on a basis of symbols $B = \{\E_1, \ldots, \E_r\}$,
that is the set of elements of the form $\sum_{\ww\in\Ww(B)} \lambda_{\ww}\ww$, $\lambda_{\ww} \in \mathbb{R}$, with possibly many null coefficients, where for now $\lambda \ww$ is interpreted as a formal symbol rather than a multiplication.
We endow~$\Vv$ with a vector space structure, namely for any $\lambda, \mu \in \RR$
and $\ww\in \Ww(B)$,
both $\lambda (\mu \ww) := (\lambda \mu) \ww$ and $\lambda \ww + \mu \ww := (\lambda + \mu) \ww$ hold.
We define $\tens: \Vv \times \Vv \to \Vv$ as
the concatenation $\ww \tens \vv := \ww\vv$, so that~$\Vv$ acquires the structure of an algebra.
For a path~$\gamma\in\Pab$, we may identify its signature with the non-commutative formal power series
\begin{equation}\label{eq:FormalPower}
\Ss(\gamma)_{a, b} = 1 + \Ss(\gamma)_{a, b}^1 \E_1 + \ldots + \Ss(\gamma)_{a,b}^d \E_d + \Ss(\gamma)_{a,b}^{1, 1} \E_1 \E_1 + \Ss(\gamma)_{a,b}^{1,2} \E_1 \E_2 + \ldots
\end{equation}
The $=$ sign is a slight abuse of notation, but is justified as a one-to-one correspondence between signatures of $d$-dimensional paths and the non-commutative formal power series in $d$ letters.
Given a power series $\omega \in \Vv$, where the coefficient~$\lambda_0$ of the empty word
is not null, we define
\begin{equation}\label{eq:logsignature-defn}
\log \omega := \log(\lambda_0) + \sum_{n \geq 1} \frac{(-1)^n}{n} \left( 1 - \frac{\omega}{\lambda_0}\right)^{\tens n},
\end{equation}
and thus deduce the logsignature $\log \Ss(\gamma)_{a,b}$ of~$\gamma$ of a path $\gamma\in\Pab$.

From Definition~\ref{def:k-fold-iterated-integral}, the level-one term~$\Ss(\gamma)^i_{a,b}$
corresponds to the displacement of the coordinate path $\gamma^i$ between times~$a$ and~$b$.
In fact, this $i$\textsuperscript{th} displacement fully determines the $k$-fold iterated integral
over the $i$\textsuperscript{th} index:
\begin{proposition}\label{prop:signature-of-word-of-one-letter}
For $\gamma\in\Pab$, the identity
$\Ss(\gamma)^{\overbrace{i, \ldots, i}^{k \,\rm{times}}}_{a, b} = (\gamma^i_b - \gamma^i_a)^k / k!$
holds for any $i \in \{1, \ldots, d\}$, $k \in \NN$.
\end{proposition}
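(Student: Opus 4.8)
The plan is to proceed by induction on $k$, exactly as in the proof of Proposition~\ref{prop:translation-invariance-of-path-signature}, exploiting the recursive definition of the iterated integrals in Definition~\ref{def:k-fold-iterated-integral}. Fix $i \in \{1, \ldots, d\}$ and abbreviate $\delta := \gamma^i_b - \gamma^i_a$. For the base case $k = 1$, the claim $\Ss(\gamma)^{i}_{a,b} = \delta$ is immediate from the definition $\Ss(\gamma)^{i}_{a,t} = \int_{(a,t]} \D\gamma^i_s$ evaluated at $t = b$, since the integral of $\D\gamma^i_s$ over $(a,b]$ telescopes (over the partition intervals on which $\gamma$ is differentiable) to $\gamma^i_b - \gamma^i_a$.

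For the inductive step, suppose $\Ss(\gamma)^{i,\ldots,i}_{a,t} = (\gamma^i_t - \gamma^i_a)^k / k!$ for all $t \in (a,b]$ (note the induction hypothesis must be stated with the variable endpoint $t$, not just $b$, since the recursion integrates over $s \in (a,t)$). Then, using the recursive formula,
\begin{equation*}
\Ss(\gamma)^{\overbrace{i,\ldots,i}^{k+1}}_{a,b}
= \int_{a < s < b} \Ss(\gamma)^{i,\ldots,i}_{a,s}\,\D\gamma^i_s
= \int_{a}^{b} \frac{(\gamma^i_s - \gamma^i_a)^k}{k!}\,\frac{\D\gamma^i_s}{\D s}\,\D s.
\end{equation*}
The integrand is $\frac{\D}{\D s}\!\left[\frac{(\gamma^i_s - \gamma^i_a)^{k+1}}{(k+1)!}\right]$ on each interval of differentiability, so the fundamental theorem of calculus (applied piecewise and summed over the partition, with the boundary terms cancelling in pairs) gives $\frac{(\gamma^i_b - \gamma^i_a)^{k+1}}{(k+1)!} - 0 = \delta^{k+1}/(k+1)!$, completing the induction.

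I do not expect any serious obstacle here; the statement is essentially the observation that the iterated integral of a one-dimensional path against itself $k$ times is a repeated antiderivative, giving the $k$-th power divided by $k!$. The only point requiring a little care is the bookkeeping across the partition $a = x_1 < \cdots < x_n = b$ of $[a,b]$ on which $\gamma$ is piecewise differentiable: one should either invoke continuity of $s \mapsto (\gamma^i_s - \gamma^i_a)^{k+1}$ to see that the telescoping sum of $\left[\frac{(\gamma^i_s - \gamma^i_a)^{k+1}}{(k+1)!}\right]$ evaluated at consecutive breakpoints collapses, or simply note that the chain rule identity holds Lebesgue-almost everywhere and the function is absolutely continuous. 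A cleaner alternative worth mentioning is the substitution $u = \gamma^i_s$, reducing the integral directly to $\int_{\gamma^i_a}^{\gamma^i_b} \frac{(u - \gamma^i_a)^k}{k!}\,\D u$, though this requires monotonicity of $\gamma^i$ on each piece or an appeal to the change-of-variables formula for Stieltjes integrals; the inductive argument above avoids this subtlety and is the route I would write up.
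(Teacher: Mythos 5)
Your proof is correct and follows essentially the same route as the paper: induction on $k$ via the recursive definition, with the integrand recognised as an exact derivative and evaluated by the fundamental theorem of calculus. Your extra care in stating the induction hypothesis with a variable endpoint $t$ and in handling the piecewise-differentiable partition is a welcome refinement, but it does not change the argument.
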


\begin{proof}
Let $i \in \{1, \ldots, d\}$. The $k = 1$ case is trivial. Suppose the result holds for $k \in \NN$; by induction
    \begin{equation*}
        \Ss(\gamma)^{\overbrace{i, \ldots, i}^{k+1 \hspace{0.15em} \rm{times}}}_{a, b} = \int_a^b \frac{(\gamma^i_t - \gamma^i_a)^k}{k!} \hspace{0.2em} \frac{\D \gamma^i_t}{\D t} \D{t} = \left[ \frac{(\gamma^i_t - \gamma^i_a)^{k+1}}{(k+1)!} \right]_a^b = \frac{(\gamma^i_b - \gamma^i_a)^{k+1}}{(k+1)!}.
    \end{equation*}
\end{proof}

This shows that the signature of a one-dimensional path is completely determined by its displacement. 
It should be noted however that the time series of price paths, as considered in this paper, are not one-dimensional but instead are two-dimensional processes with time in the first coordinate, 
namely
$\{(t_1, x_1), \ldots, (t_n, x_n)\}$ instead of $\{x_1, \ldots, x_n\}$.

Another important result connects second-order signature terms to the product of first-order terms. 
We provide an understanding of this result in the case where the curve is the concatenation of paths which are piecewise monotone, and postpone its proof to Appendix~\ref{app:Proof}.

\begin{lemma}\label{lem:sum-of-second-order-signature-terms}
The equality
$\Ss(\gamma)^{i, j}_{a, b} + \Ss(\gamma)^{j, i}_{a, b} = \Ss(\gamma)^i_{a,b} \Ss(\gamma)^j_{a,b}$
holds for any path $\gamma\in\Pab$ and $1\leq i,j \leq d$.
\end{lemma}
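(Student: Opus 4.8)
The plan is to prove the shuffle-type identity $\Ss(\gamma)^{i,j}_{a,b} + \Ss(\gamma)^{j,i}_{a,b} = \Ss(\gamma)^i_{a,b}\Ss(\gamma)^j_{a,b}$ by a direct appeal to Fubini's theorem on the simplex, combined with the integration-by-parts structure built into Definition~\ref{def:k-fold-iterated-integral}. Writing $G^i_t := \Ss(\gamma)^i_{a,t} = \gamma^i_t - \gamma^i_a$ for the running level-one term (which vanishes at $t=a$), the key observation is that $\Ss(\gamma)^{i,j}_{a,b} = \int_{a}^{b} G^i_s\,\D\gamma^j_s$ and $\Ss(\gamma)^{j,i}_{a,b} = \int_{a}^{b} G^j_s\,\D\gamma^i_s$, so the left-hand side is $\int_a^b \bigl(G^i_s\,\D\gamma^j_s + G^j_s\,\D\gamma^i_s\bigr)$. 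Since $\D G^i_s = \D\gamma^i_s$ on each smooth piece, this integrand is exactly $\D\!\left(G^i_s G^j_s\right)$ by the ordinary product rule, so the integral telescopes to $\left[G^i_s G^j_s\right]_{s=a}^{s=b} = G^i_b G^j_b - G^i_a G^j_a = \Ss(\gamma)^i_{a,b}\Ss(\gamma)^j_{a,b}$, since $G^i_a = G^j_a = 0$.

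The steps I would carry out, in order: first, unwind the recursive definition to express both second-order terms as single iterated integrals of the running first-level signature against $\D\gamma$; second, add them and recognise the product rule; third, handle the piecewise-smooth issue carefully — the path $\gamma\in\Pab$ is differentiable only on the open intervals $(x_\ell,x_{\ell+1})$ of a partition $a = x_1 < \cdots < x_n = b$, so I would apply the product-rule/fundamental-theorem-of-calculus argument on each subinterval and sum, using the continuity of $\gamma$ (hence of $G^i G^j$) to make the boundary terms cancel in a telescoping sum across the partition; fourth, collect the endpoints to recover $G^i_b G^j_b$. Equivalently, one can phrase step two geometrically: $\Ss(\gamma)^{i,j}_{a,b}$ integrates over the lower simplex $\{a<t_1<t_2<b\}$ and $\Ss(\gamma)^{j,i}_{a,b}$ over the upper simplex $\{a<t_2<t_1<b\}$ (with the roles of the two differentials swapped), and their union is the full square $(a,b)^2$ up to the diagonal of measure zero, whose integral factorises as $\left(\int_a^b \D\gamma^i\right)\!\left(\int_a^b \D\gamma^j\right)$ by Fubini.

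The main obstacle is purely bookkeeping rather than conceptual: making the piecewise-smooth telescoping rigorous, i.e. justifying that $\sum_\ell \left[G^i_s G^j_s\right]_{s=x_\ell}^{s=x_{\ell+1}}$ collapses cleanly — this needs the continuity of $\gamma$ at the partition points so that the one-sided limits of $G^i G^j$ agree there — and ensuring the convention for the integral over a piecewise-differentiable curve in the displayed formula before Definition~\ref{def:k-fold-iterated-integral} is applied consistently to the product. The excerpt signals that the authors prefer to argue this under an additional piecewise-monotone assumption and defer the full argument to Appendix~\ref{app:Proof}; in that reduced setting one can even avoid calculus entirely and reason with signed areas of axis-parallel rectangles, which is likely the intended "understanding" of the identity. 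Either route reduces the lemma to the single scalar identity $(u_1+u_2)(v_1+v_2) = u_1v_1 + (u_1v_2 + u_2v_1) + u_2v_2$ summed over monotone pieces, with the cross terms reassembling into the two iterated integrals.
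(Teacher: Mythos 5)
Your proof is correct, and it takes a genuinely different route from the paper's. The paper first reduces to $\gamma^i_a=\gamma^j_a=0$ via translation invariance, establishes the identity geometrically for monotone pieces (as areas of rectangles), and then propagates it through concatenations using Chen's identity $\Ss(\varphi\ast\phi)=\Ss(\varphi)\tens\Ss(\phi)$, concluding by induction over piecewise-monotone segments; as the authors acknowledge, this only yields the result for curves admitting such a decomposition. Your argument is both more elementary and more general: writing $G^i_t=\gamma^i_t-\gamma^i_a$, the sum $\Ss(\gamma)^{i,j}_{a,b}+\Ss(\gamma)^{j,i}_{a,b}=\int_a^b\bigl(G^i_s\,\D\gamma^j_s+G^j_s\,\D\gamma^i_s\bigr)$ is an exact differential on each smooth piece, and continuity of $\gamma$ at the partition points makes the telescoping sum collapse to $G^i_bG^j_b=\Ss(\gamma)^i_{a,b}\Ss(\gamma)^j_{a,b}$; your alternative Fubini phrasing (the two simplices $\{t_1<t_2\}$ and $\{t_2<t_1\}$ tile the square up to a null set) is the standard level-two shuffle argument and works for any $\gamma\in\Pab$ with no monotonicity hypothesis and no reduction to a translated path (indeed $G^i_a=0$ automatically). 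What the paper's route buys instead is the geometric picture of signed areas and an illustration of Chen's concatenation identity, which the authors evidently wish to showcase; what yours buys is a shorter, fully rigorous proof covering the whole class $\Pab$ in one pass. The only point to make explicit in a write-up is the one you already flag: the convention for integrals over piecewise-differentiable curves must be applied consistently so that the fundamental theorem of calculus on each $(x_\ell,x_{\ell+1})$ plus continuity at the nodes justifies the telescoping.
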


\begin{figure}[ht]
    \centering
        \includegraphics[scale=0.4]{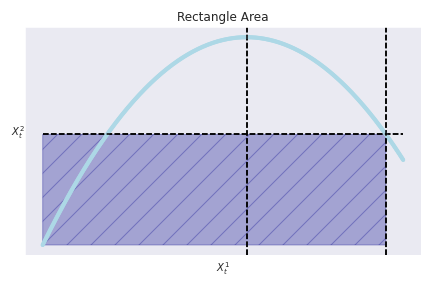}
        \includegraphics[scale=0.4]{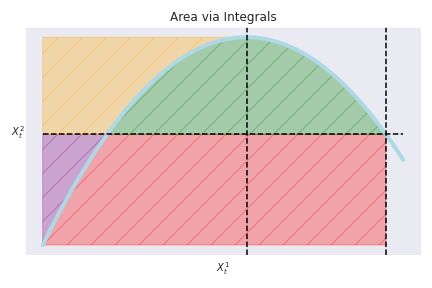}
    \caption{Inductive step for Lemma~\ref{lem:sum-of-second-order-signature-terms} - two methods to count the product of the displacements, depicted as the blue rectangle}
    \label{fig:area-via-integrals}
\end{figure}

A geometric interpretation of this result may be seen in Figure~\ref{fig:area-via-integrals}. The integral $\Ss(\gamma)^{1,2}$ for example may be considered as the sum of the signed areas over each section where the curve is monotone. 
The statement then reads that the signed area of the blue rectangle in the left image may be constructed as follows: take the integral of $\gamma^2 \D{\gamma^1}$, to contribute the positive green and red sections, then the integral of $\gamma^1 \D{\gamma^2}$ from zero until the maxima (with respect to the $\gamma^1$ coordinate, the left-most black dotted line) contributes positive purple and yellow areas; the final integral from the maxima to the end point (second black dotted line) contributes negative green and negative yellow areas. The resulting sum of all these signed areas is the area of the blue rectangle.

\begin{remark}
Lemma~\ref{lem:sum-of-second-order-signature-terms} highlights that there is some redundancy in the vector representation of the signature which we have so far discussed.
If the terms $\Ss(\gamma)^i, \Ss(\gamma)^j$ and $\Ss(\gamma)^{i, j}$ are known, then $\Ss(\gamma)^{j, i}$ may be inferred without an explicit record in the vector.
In the same way, the terms corresponding to words in one letter have the representation of Proposition~\ref{prop:signature-of-word-of-one-letter}; we note that only one of these terms is required for each letter in order to compute all such terms.
\end{remark}

Let us recall the notion of a Lie bracket operation $[\cdot, \cdot]$ induced by~$\tens$:
$[x,y] := x \tens y - y \tens x$ for $x, y \in \Vv$.
Combining~\eqref{eq:FormalPower} and~\eqref{eq:logsignature-defn} yields, for any path $\gamma\in\Pab$,
    \begin{equation}
        \label{eq:logsignature-expansion}
        \log \Ss(\gamma)_{a,b} = \sum_{k \geq 1} \sum_{i_1, \ldots, i_k \in \{1, \ldots, d\}} 
\lambda_{i_1, \ldots, i_k} \Big[e_{i_1}, \big[e_{i_2}, \ldots, [e_{i_{k-1}}, e_{i_k}] \ldots\big]\Big].
    \end{equation}
A vector representation of the logsignature therefore only requires entries corresponding to each of the basis elements of the form appearing in~\eqref{eq:logsignature-expansion}. 
This has considerably fewer terms (up to a given level) than the full signature in Definition~\ref{def:path-signature}. 
A five-dimensional path for example, up to the third level, has~$155$ terms in its signature and~$55$ in its logsignature (see~\cite{Reizenstein17} for formulae on the sizes of signatures). 
This makes the logsignature particularly enticing from a computational point of view, 
not just for the reduction of  required working memory, 
but also because convergence time of algorithms will benefit from the removal of redundancy in the feature set.

\subsection{Signature of data points}\label{sec:signature-of-data-points}
Data, however, is not continuously observed and instead discrete observations $\{(t_i, x_i)\}_{i=1,\ldots, n}$ of values~$x_i$ at times~$t_i$ are more realistic.
In order to associate a signature to this set of points, some interpolation is required.
Among several approaches proposed in the literature, we present here the piecewise linear and the rectilinear interpolations used by Levin, Lyons and Ni~\cite{Levin13}.

\begin{definition}[Path interpolation]
Let $\Xx = \{(t_1, x_1), \ldots, (t_n, x_n)\}$ be a set of observations.
\begin{itemize}
\item The piecewise linear interpolation  $X: [t_1, t_n] \to \RR$ of~$\Xx$ is defined as
$$
X(t) := \sum_{i = 1}^{n-1} \left[x_i + (x_{i+1}-x_i)\frac{t - t_i}{t_{i+1}-t_i}\right]\ind_{\{t \in [t_i, t_{i+1})\}} + x_n \ind_{\{t = t_n\}};
$$
\item The rectilinear interpolation path $X':[t_1, t_n] \to \RR$ of~$\Xx$ is given by
$$
X'(t) := \sum_{i = 1}^{n-1} x_i \ind_{\{t \in [t_i, t_{i+1})\}} + x_n \ind_{\{t = t_n\}}.
$$
\end{itemize}
\end{definition}
Figure~\ref{fig:data-interpolation} shows the behaviour of these two interpolations over the set of points $\{(0, 8), (2,0), (3, 12), (6, 14)\}$:

\begin{figure}[ht]
    \centering
        \includegraphics[scale=0.4]{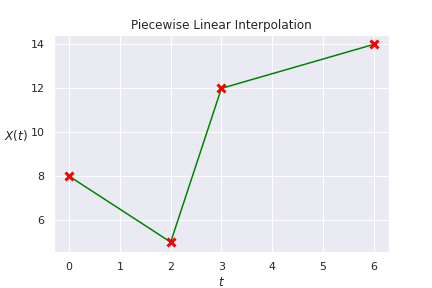}
        \includegraphics[scale=0.4]{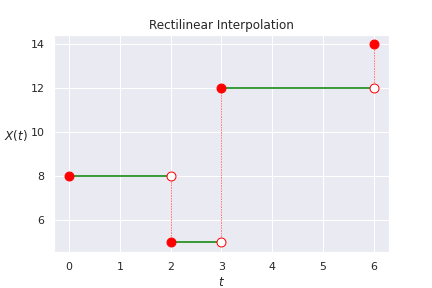}
    \caption{Piecewise linear and rectilinear interpolations}
    \label{fig:data-interpolation}
\end{figure}

In Section~\ref{sec:Application}, we will be using the piecewise linear interpolation of data points to form piecewise differentiable curves over which we may compute signatures. With this in place we may demonstrate how the AG-algorithm can be used to cluster market data time series.

\section{Numerical analysis}\label{sec:Application}

We now work towards the clustering of market time series data.
We start with simulated price paths following the Black-Scholes dynamics
\begin{equation}\label{eq:gbm-solution}
S_t = S_0 \exp \left\{ \left(\mu - \frac{\sigma^2}{2}\right)t + \sigma W_t \right\},
\end{equation}
where~$(W_t)_{t\geq 0}$ is a standard Brownian motion, and $S_0=1$ so the paths represent returns.
Here, a regime corresponds to a choice of $(\mu, \sigma)$
within a possible set of parameters $\Mf$.
We demonstrate the clustering of Brownian paths by selecting four regimes according to the parameters in Table~\ref{tab:brownian-paths-parameters}.

\begin{table}[ht]
    \centering
\begin{tabular}{|c|cccc|}
\hline
 & Regime 1 & Regime 2 & Regime 3 & Regime 4\\
\hline
    $\mu$ & 5\% & 5\% & 2\% & 2\% \\
    $\sigma$ & 10\% & 20\% & 10\% & 20\% \\
\hline
\end{tabular}
\vspace{0.1cm}
    \caption{Space $\Mf$ of parameters for the Brownian paths in Figure~\ref{fig:synthetic-paths-eigengaps}}
    \label{tab:brownian-paths-parameters}
\end{table}
\subsection{Regime points and point elements}

\begin{figure}[b]
    \centering
        \includegraphics[scale=0.55]{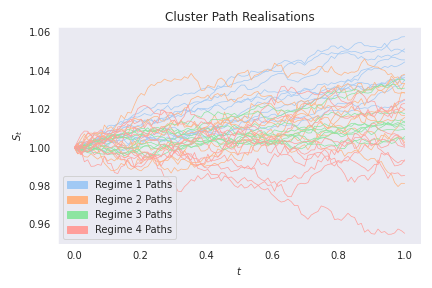}
        \includegraphics[scale=0.55]{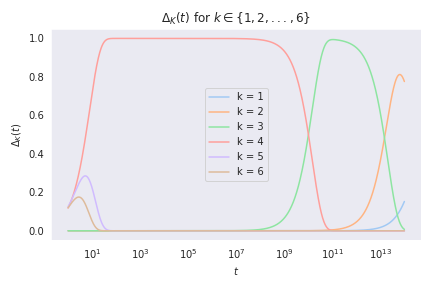}
    \caption{Synthetic paths - generated paths and eigengaps plot}
    \label{fig:synthetic-paths-eigengaps}
\end{figure}

In this setting, we will understand a point as a collection of path signatures.
The component paths are thought of as being returns of some collection of securities over some time horizon, whose prices evolve according to the dynamics of~\eqref{eq:gbm-solution} with a common regime $(\mu, \sigma)$ from Table~\ref{tab:brownian-paths-parameters}.
The paths are mapped to the interval $[0,1]$. We then compute signatures, truncated to some level, and scale each of the level-$k$ terms by~$k!$.
By repeated application
of Gronwall's Lemma~\cite[Lemma 3.2]{FrizVictoir},
the iterated integral of level~$k$ of a bounded variation path 
is equal to its $1$-norm divided by~$k!$,
hence our scaling ensures that each level is comparable to the others.
Algorithm~\ref{alg:regime-point} outlines the construction of the set of points
$\{\{\Xx^{\mu, \sigma}_i\}_{i=1,\ldots, k}\}_{(\mu,\sigma) \in \Mf}$.

\begin{algorithm}
\caption{Generation of a regime-point}
\textbf{Input}: Number of paths $n$ to generate per point, signature truncation depth $l$, number of divisions $m$ of the interval $[0,1]$, parameters $(\mu, \sigma)\in\Mf$.\\
\textbf{Output}: Single metric space point.
\begin{enumerate}
\item For each $i \in \{1, \ldots, n\}$, simulate a path
$\mathrm{S}_i = (S_i^1, \ldots, S_i^m)$, of length m according to~\eqref{eq:gbm-solution};
\item For each $i$, time-augment the path to obtain a two-dimensional path $\{(1/m, S_i^1), ~\ldots, ~(1, S_i^m)\}$;
\item Construct the piecewise-differentiable function $\widetilde{S}_i: [0,1] \to \RR^2$ by linear interpolation;
\item Let $x_i = \Ss(\widetilde{S}_i)_{0,1}^{\leq l}$ be the signature transform of the augmented path
$(\widetilde{S}_i)$ up to level~$l$;
\item Return $\Xx^{\mu, \sigma} := \{x_1, \ldots, x_n\}$
\end{enumerate}
\label{alg:regime-point}
\end{algorithm}

\subsection{Distance between collections of signatures}

Consider the points $\Xx = \Xx^{\mu, \sigma}$ and $\widetilde{\Xx} = \widetilde{\Xx}^{\widetilde{\mu}, \widetilde{\sigma}}$, generated according to Algorithm~\ref{alg:regime-point}. We may think of the regime parameters $(\mu, \sigma)$,  $(\widetilde{\mu}, \widetilde{\sigma})$, as inducing distributions~$\mathtt{P}$
and~$\widetilde{\mathtt{P}}$ of path signatures.
The expected distance between the two collections may then be understood as a distance between these two distributions.
Given a collection $X = \{x_1, \ldots, x_n\}$ of independent samples from a population~$p$
and a collection $Y = \{y_1, \ldots, y_m\}$ of samples from a population~$\widetilde{p}$,
the maximum mean discrepancy test~\cite{Gretton08} is a two-sample hypothesis test used to determine whether there is sufficient evidence at some significance level to reject the null hypothesis  $p = \widetilde{p}$.
In order to define this statistic, we first recall the definition of a reproducing kernel Hilbert space (RKHS):

\begin{definition}\label{def:reproducing-kernel-hilbert-space}
Let $\Xf$ be a set and $\Hh$ a Hilbert space of functions from~$\Xf$ to~$\RR$
endowed with an inner product~$\langle\cdot,\cdot\rangle$.
For each $x \in \Xf$, the evaluation functional $\Ll_x: \Hh \to \RR$ is defined by
$\Ll_{x} f := f(x)$.
We call~$\Hh$ a reproducing kernel Hilbert space (RKHS) if
$\Ll_x$ is continuous for every $x \in \Xf$.
In that case, for each $x \in \Xf$, there exists $K_x \in \Hh$ such that
$\Ll_x f = \langle f, K_x \rangle$, for any $f \in \Hh$
and the function $k: \Xf \times \Xf \to \RR$ such that $k(x, y) := \langle K_x, K_y \rangle$
is called the reproducing kernel for~$\Hh$.
Finally, $\Hh$ is said to be universal~\cite{Steinwart01} if $k(x, \cdot)$ is continuous for all $x$ and $\Hh$ is dense in~$\Cc(\Xf)$, the space of continuous functions from~$\Xf$ to~$\RR$.
\end{definition}

We have the following formulation of the maximum mean discrepancy statistic~\cite{Gretton08}:

\begin{definition}
Let~$\Ff$ be a class of functions from~$\Xf$ to~$\RR$, and let~$\PP$ and~$\QQ$ two distributions on~$\Xf$.
The Maximum Mean Discrepancy of~$\PP, \QQ$ over~$\Ff$ is defined as
\begin{equation}\label{eq:mmd}
\MMD^{\Ff}[\PP, \QQ] := \sup\limits_{f \in \Ff} \Big\{\EE_{X \sim \PP}[f(X)] - \EE_{y \sim \QQ}[f(Y)] \Big\}.
\end{equation}
\end{definition}

If, instead of observing~$\PP$ and~$\QQ$, we have independent observations $X = \{x_1, \ldots, x_m\}$ and $Y = \{y_1, \ldots, y_n\}$ from~$\PP$ and~$\QQ$ respectively,
then an empirical estimate of the $\MMD$ is given by
$$
\MMD^{\Ff}[X, Y] := \sup\limits_{f \in \Ff} \left\{\frac{1}{m} \sum_{i = 1}^{m} f(x_i) - \frac{1}{n} \sum_{i = 1}^n f(y_i) \right\}.
$$
Let $\Hh$ be a universal reproducing kernel Hilbert space with associated kernel~$k$
and~$\Ff$ the unit ball in~$\Hh$.
Then the empirical estimate can be computed in terms of the kernel $k(\cdot, \cdot)$ as
\begin{equation}
\MMD^{\Ff}[X,Y] =
\left[ \frac{1}{m^2} \sum_{i_1, i_2 = 1}^m k(x_{i_1}, x_{i_2})
 - \frac{2}{mn} \sum_{i = 1}^{m} \sum_{j = 1}^{n} k(x_i, y_j) + \frac{1}{n^2} \sum_{j_1, j_2 = 1}^n k(y_{j_1}, y_{j_2}) \right]^{1/2}.
        \label{eq:mmd-estimate}
    \end{equation}

We will use in particular the reproducing kernel Hilbert space induced by the Gaussian kernel (shown to be universal in \cite{Steinwart01})
\begin{equation}\label{eq:gaussian-kernel}
    k^{\sigma}(x, y) := \exp \left\{-\frac{\norm{x-y}^2}{2\sigma^2} \right\},
\end{equation}
on compact subsets of $\RR^d$, for fixed $\sigma>0$.
Note that the maximum mean discrepancy is not a distance function,
since~\eqref{eq:mmd} depends on the order of~$\PP$ and~$\QQ$.
Nevertheless, if the induced Hilbert space~$\Hh$ is a universal RKHS,
then $\MMD^{\Ff}[\PP, \QQ] = 0$ if and only if $\PP= \QQ$,
as proved in~\cite{Gretton08}.
Furthermore, the estimate~\eqref{eq:mmd-estimate} is symmetric.
We show below that this choice of metric is sufficient to allow for the classification of regimes from Brownian paths.

\subsection{Results}

\begin{figure}[ht]
    \centering
        \includegraphics[scale=0.5]{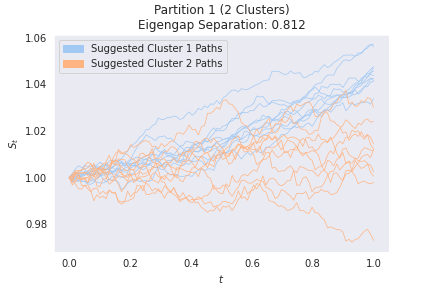}
        \includegraphics[scale=0.5]{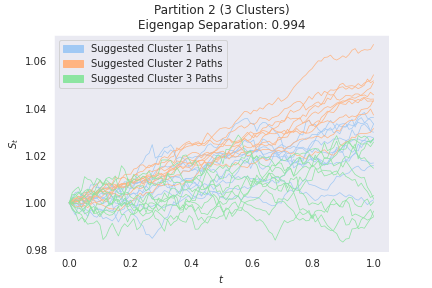}
        \includegraphics[scale=0.5]{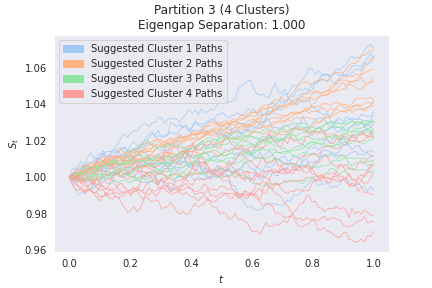}
    \caption{Synthetic paths example - suggested path clusterings}
    \label{fig:synthetic-paths-best-clusterings}
\end{figure}

For each regime in Table~\ref{tab:brownian-paths-parameters},
we generate 10 samples according to Algorithm~\ref{alg:regime-point},
each consisting of~$40$ paths, each path coming from a uniform partition of $[0,1]$ with 100 time steps.
A subset of the resulting paths (with~$10$ paths per regime) is shown
in Figure~\ref{fig:synthetic-paths-eigengaps} (left).
For each point in the space, the signatures up to level 3 are computed. The second- and third-order terms are scaled by $2!$ and $3!$ respectively.
The similarity function used was that of Equation~\eqref{eq:gaussian-similarity}.
From this setup, we may proceed with a similar analysis as in the Gaussian Clouds examples.
The maximal eigengap separation plot is presented in Figure~\ref{fig:synthetic-paths-eigengaps} (right).
As in the previous example, this structure is deemed successful at clustering the underlying points in the sense that the nontrivial clustering with the highest eigengap separation is the 4-clustering, presented in Figure~\ref{fig:synthetic-paths-best-clusterings}.
For comparison, the other suggested clusterings are also presented.
Let $\mathcal{R}_1, \ldots, \mathcal{R}_4$ denote the point indices corresponding to the regimes of Table~\ref{tab:brownian-paths-parameters}, with the same numberings. The clusters shown in Figure~\ref{fig:synthetic-paths-best-clusterings} are made precise in Table~\ref{tab:synthetic-data-cluster-partitions}. In all of the suggested partitions, the clusters are preserved, with several clusters being combined to form larger suggested clusters in the $k$-clusterings with $k < 4$.

\begin{table}[ht]
\begin{tabular}{ccc}
\hline
           & Partition                                                              & Eigengap Separation \\ \hline
2 Clusters & $\mathcal{R}_1$, $\mathcal{R}_2 \cup \mathcal{R}_3 \cup \mathcal{R}_4$ & 0.8123                  \\
3 Clusters & $\mathcal{R}_1$, $\mathcal{R}_2 \cup \mathcal{R}_3$, $\mathcal{R}_4$   & 0.9941                  \\
4 Clusters & $\mathcal{R}_1$, $\mathcal{R}_2$, $\mathcal{R}_3$, $\mathcal{R}_4$     & 1.000                   \\ \hline
\end{tabular}
\vspace{0.1cm}
\caption{Synthetic data - suggested clusters}
\label{tab:synthetic-data-cluster-partitions}
\end{table}


\appendix
\section{Proof of Lemma~\ref{lem:sum-of-second-order-signature-terms}}\label{app:Proof}

\small{
It suffices to prove the result for curves with $\gamma^i_0 = \gamma^j_0 = 0$.
Indeed, then if $\widetilde{\gamma}\in\Pab$ is the translation of $\gamma$ having $\widetilde{\gamma}^i_0 = \widetilde{\gamma}^j_0 = 0$, by Proposition~\ref{prop:translation-invariance-of-path-signature} (translation invariance) we have
$$
\Ss(\gamma)^{i, j}_{a,b} + \Ss(\gamma)^{j, i}_{a,b} = \Ss(\widetilde{\gamma})^{i, j}_{a,b} + \Ss(\widetilde{\gamma})^{j, i}_{a,b} = \Ss(\widetilde{\gamma})^{i}_{a,b} \Ss(\widetilde{\gamma})^{j}_{a,b} = \Ss(\gamma)^{i}_{a,b}\Ss(\gamma)^{j}_{a,b}.
$$

So assume without loss of generality that $\gamma^i_0 = \gamma^j_0 = 0$.
The result is clear when the function is monotone. 
We refer to Figure~\ref{fig:monotone-function} below. 
The term on the right-hand side is the area of the bounding rectangle, and the term on the left is the sum of the shaded areas, which are the integrals in both directions.
\begin{figure}[h!]
    \centering
    \includegraphics[scale=0.4]{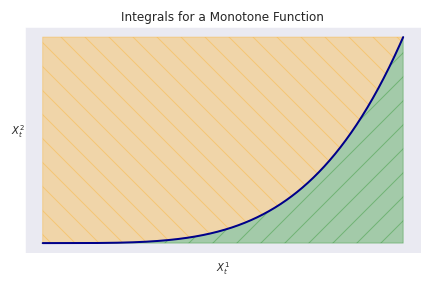}
    \caption{Monotone function step (base case) for Lemma~\ref{lem:sum-of-second-order-signature-terms}}
    \label{fig:monotone-function}
\end{figure}

Next, suppose we can write $\gamma$ as the concatenation of two paths, $\gamma = \varphi \ast \phi$, with $\varphi\in\Pp_{a,t}^d$ and $\phi\in\Pp_{t,b}^d$ both satisfying Lemma~\ref{lem:sum-of-second-order-signature-terms} (for example both are monotone) - we refer to Figure~\ref{fig:area-via-integrals}. Dropping the subscript $a, b$ for brevity of notation, we may write
$$
\Ss(\varphi)^i\Ss(\varphi)^j = \Ss(\varphi)^{i, j} + \Ss(\varphi)^{j, i}
\qquad\text{and}\qquad
\Ss(\phi)^i\Ss(\phi)^j = \Ss(\phi)^{i, j} + \Ss(\phi)^{j, i}.
$$
A well-known result from Chen (see, for example, ~\cite[Section 1.3.3]{Chevyrev16}) establishes a connection between the signature of a concatenation of paths and the operation~$\tens$ of Section~\ref{sec:log-signature}, and states $\Ss(\varphi\ast \phi) = \Ss(\varphi) \tens \Ss(\phi)$.
Recall the non-commutative formal polynomial of $\Ss(\varphi)$:
\begin{equation}
    \Ss(\varphi) = 1 + S^1(\varphi)e_1 + \ldots + S^d(\varphi)e_d + S^{1,1}(\varphi)e_1e_1 + S^{1,2}(\varphi)e_1e_2 + \ldots
\end{equation}
and the similar representation for $\Ss(\phi)$. The coefficient of $e_i$ in the product $\Ss(\varphi) \tens \Ss(\phi)$ is seen to be $S^i(\varphi) + S^i(\phi)$, that is $\Ss(\gamma)^i = \Ss(\varphi \ast \phi)^i = \Ss(\varphi)^i + \Ss(\phi)^i$. Note that the geometric interpretation here is simply that the displacement in the $i\textsuperscript{th}$ coordinate path in the concatenation of~$\varphi$ and~$\phi$ is the sum of displacements in the path~$\varphi$ and~$\phi$.
We can also compute $\Ss(\varphi\ast \phi)^{i, j}$ in a similar fashion, obtaining
$$
\Ss(\varphi\ast \phi)^i = \Ss(\varphi)^i + \Ss(\phi)^i
\qquad\text{and}\qquad
\Ss(\varphi\ast \phi)^{i, j} = \Ss(\varphi)^{i}\Ss(\phi)^{j} + \Ss(\varphi)^{i,j} + \Ss(\phi)^{i,j},
$$
from which we have
    \begin{align*}
\Ss(\gamma)^{i,j} + \Ss(\gamma)^{j, i} &= \Ss(\varphi\ast \phi)^{i, j} + \Ss(\varphi\ast \phi)^{j, i} \\
&= \Ss(\varphi)^i \Ss(\phi)^j + \Ss(\varphi)^{i, j} + \Ss(\phi)^{i, j} + \Ss(\varphi)^j \Ss(\phi)^i + \Ss(\varphi)^{j, i} + \Ss(\phi)^{j, i} \\
&= \Ss(\varphi)^i \Ss(\phi)^j + \Ss(\varphi)^i \Ss(\varphi)^j + \Ss(\varphi)^j \Ss(\phi)^i + \Ss(\phi)^i \Ss(\phi)^j \\
&= (\Ss(\varphi)^i + \Ss(\phi)^i)(\Ss(\varphi)^j + \Ss(\phi)^j)
= \Ss(\varphi\ast \phi)^i \Ss(\varphi\ast \phi)^j
= \Ss(\gamma)^i \Ss(\gamma)^j.
    \end{align*}
Inductively, this proves the result for any curve composed of segments which are piecewise-monotone.
}

\hfill \ensuremath{\Box}


\bibliographystyle{abbrv}
\bibliography{biblio}

\begin{thebibliography}{10}

\bibitem{Azran06}
A.~Azran and Z.~Ghahramani.
\newblock A new approach to data driven clustering.
\newblock {\em Proceedings of the 23rd International Conference on Machine
  Learning}, 2006.

\bibitem{Boedihardjo}
H.~Boedihardjo, X.~Geng, T.~Lyons, and D.~Yang.
\newblock The signature of a rough path: uniqueness.
\newblock {\em Advances in Mathematics}, 293:720–737, 2016.

\bibitem{Chen58}
K.~T. Chen.
\newblock Integration of paths - a faithful representation of paths by
  noncommutative formal power series.
\newblock {\em Transactions of the American Mathematical Society},
  89(2):395--407, 1958.

\bibitem{Chevyrev16}
I.~Chevyrev and A.~Kormilitzin.
\newblock A primer on the signature method in machine learning.
\newblock Unpublished,
  \href{https://arxiv.org/abs/1603.03788}{arXiv:1603.03788}, 2016.

\bibitem{FrizVictoir}
P.~Friz and N.~Victoir.
\newblock {\em Multidimensional stochastic processes as rough paths}.
\newblock Springer-Verlag, NY, 2010.

\bibitem{Gretton08}
A.~Gretton, K.~M. Borgwardt, M.~Rasche, B.~Sch\"{o}lkopf, and A.~J. Smola.
\newblock A kernel method for the two-sample problem.
\newblock {\em Advances in NIPS}, 19, 2006.

\bibitem{Lyons10}
B.~Hambly and T.~Lyons.
\newblock Uniqueness for the signature of a path of bounded variation and the
  reduced path group.
\newblock {\em Annals of Mathematics}, 171(1):109--167, 2010.

\bibitem{Jiltsov20}
A.~Jiltsov.
\newblock Market regimes: how to spot them and how to trade them.
\newblock
  \href{https://www.fx-markets.com/comment/7665631/market-regimes-how-to-spot-them-and-how-to-trade-them}{FX
  Markets}, 2020.

\bibitem{Ng01}
M.~Jordan, A.~Ng, and Y.~Weiss.
\newblock On spectral clustering: analysis and an algorithm.
\newblock {\em Proceedings of the 14th NIPS International Conference}, pages
  849--856, 2001.

\bibitem{Kritzman12}
M.~Kritzman, S.~Page, and D.~Turkington.
\newblock Regime shifts: Implications for dynamic strategies.
\newblock {\em Financial Analysts Journal}, 68:22--39, 2012.

\bibitem{Levin13}
D.~Levin, T.~Lyons, and H.~Ni.
\newblock Learning from the past, predicting the statistic for the future,
  learning an evolving system.
\newblock \href{https://arxiv.org/abs/1309.0260}{arXiv:1309.0260}, 2013.

\bibitem{Lyons07}
T.~Lyons, M.~Caruana, and T.~L\'{e}vy.
\newblock {\em Differential equations driven by rough paths}.
\newblock Springer-Berlin Heidelberg, 2007.

\bibitem{LyonsBook}
T.~Lyons and Z.~Qian.
\newblock {\em System control and rough paths}.
\newblock Oxford University Press, 2002.

\bibitem{LyonsInverting}
T.~Lyons and W.~Xu.
\newblock Inverting the signature of a path.
\newblock {\em Journal of the European Mathematical Society}, 20:1655–1687,
  2018.

\bibitem{Nystrup20}
P.~Nystrup, P.~N. Kolm, and E.~Lindstr\"{o}m.
\newblock Greedy online classification of persistent market states using
  realized intraday volatility features.
\newblock {\em The Journal of Financial Data Science}, 2.3, 2020.

\bibitem{Reizenstein17}
J.~Reizenstein.
\newblock Calculation of iterated-integral signatures and logsignatures.
\newblock \href{https://arxiv.org/abs/1712.02757}{arXiv:1712.02757}, 2017.

\bibitem{Steinwart01}
I.~Steinward.
\newblock On the influence of the kernel on the consistency of support vector
  machines.
\newblock {\em Journal of Machine Learning Research}, 2:67--93, 2001.

\end{thebibliography}

\end{document}